\documentclass[a4paper,10pt,twoside]{article}

\usepackage{amsmath, amssymb, amsthm, ascmac}
\usepackage{graphicx}
\usepackage{xcolor} 

\usepackage{tikz}
\usepackage{pgfplots}
\pgfplotsset{compat=1.17} 
\usepgfplotslibrary{fillbetween}
\usepackage{caption}
\usepackage{booktabs}
\usepackage{float}

\theoremstyle{definition}

\newtheorem{thm}{Theorem}[section]

\newtheorem{rem}{Remark}[section]
\newtheorem{ex}{Example}[section]
\newtheorem{defn}{Definition}[section]
\newtheorem{as}{Assumption}

\numberwithin{equation}{section}
\allowdisplaybreaks

\def\R{\mathbb{R}}

\def\ve{{\varepsilon}}
\def\e{{\epsilon}}
\def\D{\Delta}
\def\d{\delta}
\def\l{\left}
\def\r{\right}
\def\a{\alpha}
\def\b{\beta}

\def\s{\sigma}

\def\k{\kappa}
\def\la{\lambda}

\def\F{\mathcal{F}}

\def\P{{\mathbb{P}}}
\def\toP{\stackrel{p}{\longrightarrow}}
\def\toD{\stackrel{d}{\longrightarrow}}
\def\E{\mathbb{E}}

\def\I{\mathbf{1}}
\def\wh#1{\widehat{#1}}

\def\wt#1{\widetilde{#1}}
\def\Var{\mathrm{Var}}

\newcommand{\bi}{\begin{itemize}}
\newcommand{\ei}{\end{itemize}}

\title{Real-time Win Probability and Latent Player Ability via STATS X in Team Sports}
\author{Yasutaka Shimizu\footnote{{\tt shimizu@waseda.jp}},\ \ \ Atsushi Yamanobe\\  Department of Applied Mathematics, Waseda University}
\date{\today}

\begin{document}
\maketitle

\begin{abstract} 
This study proposes a statistically grounded framework for real-time win probability evaluation and player assessment in score-based team sports, based on minute-by-minute cumulative box-score data. We introduce a continuous dominance indicator (T-score) that maps final scores to real values consistent with win/lose outcomes, and formulate it as a time-evolving stochastic representation (T-process) driven by standardized cumulative statistics. This structure captures temporal game dynamics and enables sequential, analytically tractable updates of in-game win probability. Through this stochastic formulation, competitive advantage is decomposed into interpretable statistical components. Furthermore, we define a latent contribution index, STATS X, which quantifies a player's involvement in favorable dominance intervals identified by the T-process. This allows us to separate a team's baseline strength from game-specific performance fluctuations and provides a coherent, structural evaluation framework for both teams and players.
While we do not implement AI methods in this paper, our framework is positioned as a foundational step toward hybrid integration with AI. By providing a structured time-series representation of dominance with an explicit probabilistic interpretation, the framework enables flexible learning mechanisms and incorporation of high-dimensional data, while preserving statistical coherence and interpretability. This work provides a basis for advancing AI-driven sports analytics.
\begin{flushleft}
{ \it Keywords:} sports statistics, statistical STATS, latent contribution, dynamic win probability, stochastic process model,
\vspace{1mm}\\
{\it MSC2010:} {\bf 62P25}; 60G99,  68T05.
\end{flushleft}
\end{abstract}

\section{Introduction}

In recent years, with the rapid development of artificial intelligence technologies, the use of data in sports has entered a period of major transformation. Tactical analysis based on tracking data and sensor information, outcome prediction using deep learning, and strategy exploration via reinforcement learning have advanced significantly, and sophisticated analytical infrastructures are being established in professional environments. Within this broader movement, interdisciplinary informatics focused on sports, namely “sports informatics,” is emerging as a nascent academic field.

However, in many amateur settings and developmental environments, only limited game statistics (hereafter referred to as STATS) are recorded, and post-game player evaluations are often conducted based on coaches’ subjective judgments. Even in professional contexts, critical decisions such as in-game substitution choices, timeout calls, evaluation of young players, and assessments of future value directly linked to contract negotiations remain challenging problems. In such situations, a statistically grounded theoretical framework is required in order to make rational decisions from limited information.

Existing performance metrics widely used in practice possess structural limitations. Taking basketball as an example, commonly used indices such as EFF and PER are static measures formed as linear combinations of aggregated quantities such as points and rebounds, and they do not explicitly account for temporal structure or context dependence. Because their structure is dominated by scoring contributions, they are unable to adequately evaluate contributions that form or sustain the real-time “flow” of a game, particularly in crucial moments, as well as indirect effects that are not explicitly recorded in the statistics. Metrics such as plus-minus measure point differentials while a player is on the court, but they cannot disentangle the influence of simultaneously participating teammates and lack a theoretical framework that decomposes structural ability and within-game fluctuations.

These issues stem from the fact that existing metrics do not explicitly model their relationship with game outcomes as a statistical structure. In particular, it is difficult to handle the effect of “flow,” which inherently possesses temporal dependence, using only conventional STATS.

To address these challenges, the present study reconstructs existing STATS within a statistical framework and introduces an evaluation system that can be interpreted based on its relationship with game outcomes. Specifically, discrete outcomes such as win/loss results or score differentials are mapped to a continuous measure of advantage, which is then treated as the response variable in estimating its relationship with STATS. Furthermore, by extending the evaluation to a time-evolution model, we dynamically capture changes in advantage during a game, thereby indirectly extracting contributions that are difficult to assess using static aggregate metrics.

The starting point of this study is the lightweight model proposed in \cite{ys24}. That model emphasized a simple structure that can be computed in real time even by amateur teams at the elementary and junior high school levels, and it presented a framework for outcome prediction and player evaluation based solely on limited STATS. In the present paper, while preserving its fundamental philosophy, we clarify the theoretical structure and incorporate consistency between the dynamic advantage process and the win probability function. In doing so, we develop a generalizable and extensible evaluation system that aims to reconcile “a lightweight statistical infrastructure implementable in practice” with “compatibility for integration with artificial intelligence technologies.”

Although this paper does not attempt a full AI implementation and adopts relatively simple statistical models, various AI-based extensions can naturally be envisioned, including nonlinear evaluation functions, integration with deep time-series models, strategy optimization through reinforcement learning, and connections to causal inference. The proposed framework does not compete with AI; rather, it provides an interpretable foundational model that supports further sophistication through AI.

For the empirical analysis, we use official game data specially provided for the Sports Data Science Competition organized by the Sports Statistics Section of the Japanese Statistical Society. These data are rare and subject to contractual distribution restrictions, and they consist of high-quality records based on actual measurements. Utilizing this valuable dataset, we demonstrate cases in which players who were not adequately evaluated by existing metrics are re-evaluated under the proposed framework.

This study challenges the quantification of in-game “flow” and context-dependent contributions, which have been difficult to capture using publicly available statistical indices, and seeks to connect practical decision-making in the field with statistical theory. As a foundational contribution to the emerging field of sports informatics, we aim to present an evaluation system that simultaneously satisfies implementability, statistical validity, and compatibility with AI integration.

\section{Team Performance Index: T-score}

\subsection{Evaluation of Outcome Superiority: T-score Function}

Consider a team sport in which the outcome is determined by the score. Let $a \ge 0$ denote the points scored by our team and $b \ge 0$ denote the points conceded.
In this study, in order to map the game result (discrete scores) to a continuous measure of superiority, we introduce a function
\[
T : [0,\infty)^2 \to \R
\]
and evaluate the extent to which our team held an advantage in the game through the value $T(a,b)$.

\begin{defn}\label{def:T-score} 
Given a function $T: [0,\infty)^2 \to \R$, if there exists a constant $c \in \mathbb{R}$ such that for any $x \ge 0$
\[
T(x,x) = c 
\]
and, for any $a,b \ge 0$,
\begin{align*}
a > b &\iff T(a,b) > c \\
a = b &\iff T(a,b) = c \\
a < b &\iff T(a,b) < c,
\end{align*}
then the function $T$ is called a \textbf{T-score function}, and the value $T(a,b)$ is called the \textbf{T-score}.
\end{defn}

\begin{rem}\label{rem:T}
The constant $c$ serves as the “draw benchmark value.” In principle, the function $T$ is chosen so that the larger $T(a,b)$ is above $c$, the more decisive the victory, and the further $T(a,b)$ is below $c$, the more decisive the defeat. Therefore, if the T-score can be predicted, one can evaluate not only the binary outcome (win/loss) but also the degree of winning or losing in a continuous manner.
\end{rem}

\subsection{Concrete Examples of T-score Functions}

Below, we present representative examples of T-score functions and explain their statistical meaning and appropriate applications.
All of them satisfy $T(x,x)=c$ and are consistent with the direction of superiority in terms of ordering.

\begin{ex}[Simple score difference and ratio type]
\[
T(a,b)=a-b, \qquad c=0.
\]
This is the simplest form, where a draw is represented by $T=0$.
It is a natural choice for sports with relatively low scores and limited variation in total points across games, such as soccer.

On the other hand, in high-scoring sports such as basketball, the score difference tends to depend on the overall scale of the game. In such cases, a score ratio type may be considered:
\[
T(a,b)=\frac{a}{b}.
\]
In this case, the draw benchmark is $c=1$. However, this simple ratio form has certain drawbacks, and a modified version is presented in the following example.
\end{ex}

\begin{ex}[Symmetric correction of score ratio]\label{ex:T-score}
The simple score ratio above has the drawback that the distance from the draw benchmark value $1$ is asymmetric between the winning and losing sides.
For example, when $a=120, b=100$, we have $a/b=1.2$, and the distance from the benchmark is $0.2$.
On the other hand, when $a=100, b=120$, we have $a/b=0.833\ldots$, and the distance from the benchmark is $0.166\ldots$.
Thus, even with the same score difference of $20$, the distances from the benchmark do not coincide between winning and losing cases.

In this way, the simple score ratio does not provide a symmetric evaluation centered at the benchmark value $1$, and therefore lacks geometric consistency when the strength of victory or defeat is interpreted as a “distance.”
In particular, when $a>b$, we have $T(a,b)-1 = (a-b)/b$, where the denominator is always the smaller score $b$, which tends to overstate the winning side relative to the losing side.

To address this issue, while retaining the draw benchmark $c=1$, we consider the following modification so that the deviation from the benchmark becomes symmetric with respect to winning and losing:
\begin{align}
T(a,b) =\left(2-\frac{b}{a}\right)\I_{\{a\ge b\}} + \frac{a}{b}\I_{\{b>a\}},\qquad c=1. \label{T-ratio}
\end{align}
Under this definition,
\[
|T(a,b)-1|
=
\frac{|a-b|}{\max(a,b)}
\]
holds, and the distance from the draw benchmark coincides with the score difference normalized by the larger of the two scores.

\begin{figure}[tbp]
\centering
\begin{tikzpicture}
\begin{axis}[
  width=0.55\linewidth,
  height=0.45\linewidth,
  xmin=ln(0.25), xmax=ln(4),
  ymin=-1, ymax=1,
  xlabel={$\log(a/b)$},
  ylabel={$T(a,b)-1$},
  grid=both,
  major grid style={dashed},
  minor grid style={dotted},
  minor tick num=1,
  samples=800,
  domain=0.25:4,
]
\addplot[thick, domain=0.25:1]
  ({ln(x)}, {x - 1});

\addplot[thick, domain=1:4]
  ({ln(x)}, {1 - 1/x});

\addplot[dashed, domain=ln(0.25):ln(4)] {0};
\addplot[dashed] coordinates {(0,-1) (0,1)};

\end{axis}
\end{tikzpicture}
\caption{Deviation of the corrected T-score from the baseline value $1$}
\label{fig:Tscore}
\end{figure}
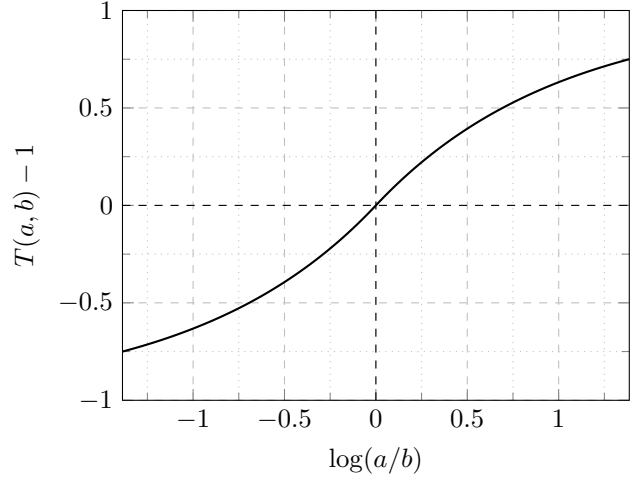

\ref{fig:Tscore} plots $\log(a/b)$ on the horizontal axis and $T(a,b)-1$ on the vertical axis.
The condition $\log(a/b)=0$ corresponds to $a=b$ and thus represents the draw benchmark value $c=1$.
The fact that this graph is symmetric with respect to the origin means that
\[
T(a,b) - 1 = -(T(b,a) - 1),
\]
that is, the superiority when the ratio $a/b>1$ and the inferiority for its reciprocal $b/a$ have equal distances from the benchmark, differing only in sign, which is the so-called antisymmetry property.
The simple score ratio $a/b$ does not possess this antisymmetry, resulting in a distortion of evaluation between the winning and losing sides.

In this paper, we analyze a basketball example in Section \ref{sec:data}, where we adopt the T-score given in \eqref{T-ratio}.

\end{ex}

\begin{ex}[Log-ratio type]
\[
T(a,b)=\log\frac{a+\kappa}{b+\kappa}, \qquad \kappa>0,\quad c=0.
\]
This form applies a logarithmic transformation to the score ratio and satisfies $T(a,b)=-T(b,a)$ (antisymmetry), thereby resolving the asymmetry issue of the simple score ratio.
It has strong theoretical consistency for low-scoring sports or in situations where scores are assumed to arise according to Poisson processes.
The parameter $\kappa$ is a stabilization constant for zero scores, but its choice requires care.
For example, when $\kappa=0$, the relation $T(\lambda a,\lambda b)=T(a,b)$ holds for any $\lambda>0$,
which implies so-called “scale invariance.”
However, when $\kappa>0$, in general
\[
T(\lambda a,\lambda b)\neq T(a,b),
\]
and in particular, $\kappa$ affects the value in low-scoring regimes.
Thus, $\kappa$ may be interpreted as a regularization parameter that balances stabilization at zero scores and scale invariance.
\end{ex}

\begin{ex}[Relative difference type]\label{ex:relative}
\[
T(a,b)=\frac{a-b}{a+b+\k}, \qquad \k>0,\quad c=0.
\]
This form constrains the score difference to the interval $(-1,1)$ and mitigates the influence of game scale.
In high-scoring sports, it enables comparability across games with different scoring levels, and is therefore suitable, for example, for rugby or American football.

By normalizing the score difference by the total score, the range becomes $-1< T(a,b) <1$, preventing extreme values.
Moreover, since $T(b,a)=-T(a,b)$ holds, the function satisfies antisymmetry, although it does not possess scale invariance.
In particular, $\k>0$ becomes dominant in low-scoring regions.
In addition, large-margin games are difficult to distinguish.
For example, comparing two lopsided games,
\[
40\text{--}0,\quad 80\text{--}0,
\]
we obtain
\[
T(40,0)=\frac{40}{40}=1,\quad
T(80,0)=\frac{80}{80}=1,
\]
so the two cases are indistinguishable.
Therefore, the property described in Remark \ref{rem:T} is not satisfied.
The following example provides a modification to address this issue.
\end{ex}

\begin{ex}[Normalized type]
\[
T(a,b)=\frac{a-b}{\sqrt{a+b+\k}},
	\]
This T-score can be interpreted as an asymptotically standardized statistic of the score difference when the scores are approximated by Poisson distributions (with $\d>0$ introduced to stabilize the denominator):
assume that the scores are independently distributed as
\[
a_n \sim Po(\lambda_{1,n}), 
\quad
b_n \sim Po(\lambda_{2,n}),
\]
and that $\lambda_{1,n}+\lambda_{2,n}\to\infty$.
Then
\[
\E[a_n-b_n]=\lambda_{1,n}-\lambda_{2,n},\quad
\Var(a_n-b_n)=\lambda_{1,n}+\lambda_{2,n}.
\]
By the central limit theorem,
\[
\frac{a_n-b_n-(\lambda_{1,n}-\lambda_{2,n})}{\sqrt{\lambda_{1,n}+\lambda_{2,n}}}\toD N(0,1).
\]
In particular, for evenly matched teams under the assumption $\lambda_{1,n}=\lambda_{2,n}$,
\[
\frac{a_n-b_n}{\sqrt{\lambda_{1,n}+\lambda_{2,n}}}\toD N(0,1).
\]
Furthermore,
\[
\frac{a_n+b_n}{\lambda_{1,n}+\lambda_{2,n}} \toP 1,
\]
and hence, by Slutsky’s theorem,
\[
\frac{a_n-b_n}{\sqrt{a_n+b_n}}\toD N(0,1),\quad n\to \infty,
\]
which explains the meaning of the normalization.

Using the same numerical example as in Example \ref{ex:relative},
\[
\frac{40}{\sqrt{40}}=\sqrt{40}\approx 6.32,\quad
\frac{80}{\sqrt{80}}=\sqrt{80}\approx 8.94,
\]
so the T-score increases as the scale of the game increases, making it suitable for high-scoring sports in which the number of scoring events is large.

\end{ex}

\section{A General Nonlinear Model for the T-score and Statistical Strength Evaluation}\label{sec:model}
\subsection{Team-wise Modeling of the T-score}

In what follows, 
Let $T^k$ denote the (end-of-game) T-score for Game $k$, and write the vector of “standardized” cumulative STATS of $d$ types as
\[
S^k:=(S_{1}^k,\dots,S_{d}^k)^\top.
\]

\begin{rem}\label{rem:standarize}
Here, “standardization” refers to the “diffusion standardization” defined later in \ref{def:standarize} and Definition \ref{def:standarize}.
However, diffusion standardization based on the final cumulative STATS after the game coincides with the usual “normalization (standardization)” given by $\frac{S_i^k - \E[S_i^k]}{\sqrt{\Var(S_i^k)}}$.
Therefore, within this section, “standardization” may be understood as the ordinary normalization that makes the mean $0$ and the variance $1$.
Note also that the “standardization” of STATS before the game starts is given by Definition \ref{def:standarize} as $S^k=(0,0,\dots,0)^\top$.
\end{rem}

\begin{as}\label{as:team}
The T-score $T^k$ of Game $k$ is determined by $d$ types of STATS and admits a general regression structure from $S^k$ to $T^k$:
there exist parameters $\a_0\in \R,\ \ \a=(\a_1,\dots,\a_p)^\top \in \R^p$ and a nonlinear function $F_{\a} : \mathbb{R}^d \to \mathbb{R}$ satisfying
\begin{align}
F_\a(0,0,\dots,0) = 0 \label{f-zero}
\end{align}
such that
\begin{align}
T^k = \a_0 + F_{\a}(S_{1}^k,\dots,S_{d}^k) + \varepsilon_k.  \label{T^k}
\end{align}
Here, $\varepsilon_k$ is an IID noise sequence with mean $0$ and variance $\s^2$.
\end{as}

Since all STATS are $0$ before the game starts, by Remark \ref{rem:standarize} and condition \eqref{f-zero}, we have $\E[T^k] =  \a_0$,
and $\a_0$ can be viewed as the basic (average) T-score that the team possesses from the outset.
A team with a larger value of $\a_0$ is interpreted as having a greater initial advantage already at the start of the game,
and thus $\a_0$ can be interpreted as representing the team’s fundamental strength.

\begin{defn}\label{def:TFS}
The average T-score that a team possesses before the game starts,
\begin{align}
TFS := \a_0,
\end{align}
is called the \textbf{Team Fundamental Score (TFS)}.
\end{defn}

On the other hand, the term $F_{\a}(S_{1}^k,\dots,S_{d}^k)$ is a score added as STATS accumulate,
and may be regarded as a game-specific statistical score.

\begin{defn}\label{def:TSS}
In Game $k$, the score added according to the team’s overall STATS $(S_1^k,\dots,S_d^k)$,
\begin{align}
TSS^k: = F_{\a}(S_1^k,\dots,S_d^k),
\end{align}
is called the \textbf{Team Statistical Score (TSS)} for Game $k$.
\end{defn}

\subsection{Player-wise Modeling of the T-score}

Under the general regression model for the team T-score, we consider a T-score based on individual STATS.
Assuming that the team T-score is given by the sum of player contributions in each game, we postulate the following additive model.

\begin{as}\label{as:personal}
Suppose that the team in Game $k$ has $J$ players who appear in the game. Then the team T-score is determined by the sum of individual T-scores,
and each player’s T-score admits a general regression structure with respect to the player’s STATS.
That is, there exists a common function $f_{\a} : \mathbb{R}^d \to \mathbb{R}$ for players $j \in \{1,2,\dots,J\}$ such that
\begin{align}
T^k = \a_0 + \sum_{j=1}^{J} f_{\a}(S_1^{k,j},\dots,S_d^{k,j})+ \varepsilon_k
\end{align}
holds. Here, $\a_0$ and $\a=(\a_1,\dots,\a_p)^\top$ are the same parameters as in Assumption \ref{as:team},
and $S_i^{k,j}$ denotes the $i$-th component of the (standardized) STATS of player $j$ in Game $k$.
\end{as}

\begin{rem}
The common choice of $f_{\a}$ is adopted to ensure that the same STATS are evaluated in the same way across players.
\end{rem}

\begin{defn}
In Game $k$, the score added to player $j$ according to the player’s STATS $(S_1^{k,j},\dots,S_d^{k,j})$,
\begin{align}
PSS^{k,j}=  f_{\a} (S_1^{k,j},\dots S_d^{k,j}),\quad j=1,2,\dots,J,
\end{align}
is called the \textbf{Player Statistical Score (PSS)} of player $j$ in Game $k$. In particular,
\[
TSS^k = \sum_{j=1}^J PSS^{k,j}.
\]
\end{defn}

\begin{defn}
The contribution of player $j$ in Game $k$ is defined by
\begin{align}
PCS^{k,j}:=\frac{\a_0}{J} +  \a_0\frac{ PSS^{k,j} - TSS^k/J}{D_k}\I_{\{D_k>0\}},
\end{align}
where $$D_k:=\sum_{j=1}^J |PSS^{k,j} -  TSS^k/J|.$$ We call $PCS^{k,j}$ the \textbf{Player Contribution Score (PCS)} of player $j$ in Game $k$.
\end{defn}

Although the definition of $PCS^{k,j}$ may look somewhat complicated, it is motivated as follows:
\bi
\item The first term $\a_0/J$ represents a “baseline share,” namely an equal allocation of TFS across players.
That is, each player is regarded as having the same basic value prior to the start of the game.

\item The second term measures the deviation of the player’s statistical contribution in that game from the team average,
and uses it as a weight to redistribute $\a_0$ across the $J$ players.
Indeed,
\[
\sum_{j=1}^J (PSS^{k,j}-TSS^k/J)=0,
\]
and if $D_k> 0$\footnote{$D_k=0\ \Leftrightarrow\ PSS^{k,1}= \cdots = PSS^{k,J}$.} then
\[
\sum_{j=1}^J PCS^{k,j} = \a_0,
\quad 
\sum_{j=1}^J \left| \frac{PSS^{k,j}-TSS^k/J}{D_k} \right| =1.
\]
\ei
In this sense, $PCS^{k,j}$ is obtained by adding,
to the “equal allocation of the team’s fundamental strength,”
a quantity in which the within-game statistical deviations are reallocated as relative weights.
Therefore, PCS is not so much an estimator of long-run ability,
but rather an index providing a within-team relative evaluation in each game,
and can be interpreted as describing a performance allocation structure under game-by-game scale adjustment.

Since PCS weights each PSS by its distance from the center (mean) of the PSS values, player evaluation should essentially be based on PCS.

\begin{table}[tbp]
\centering
{\small
\begin{tabular}{ll p{8cm}}
\toprule
Symbol  & Target level & Meaning and role \\
\hline\hline
\textbf{TFS} 
& Team (pre-game) 
& Fundamental average strength possessed by the team before the game starts.
Represents long-run team fundamental strength. \\

\textbf{TSS} 
& Team (post-game) 
& Score added by the team’s overall statistical outcomes in the game.
Represents game-specific performance. \\

\textbf{PSS} 
& Individual (post-game) 
& Player’s individual statistical contribution in the game.
Constitutes components of the team statistical score. \\

\textbf{PCS} 
& Individual (in-game evaluation) 
& Player evaluation value obtained by adding the within-game relative contribution
to the equal allocation of the team’s fundamental strength.
Represents the game-by-game allocation structure. \\
\bottomrule
\end{tabular}
}
\caption{Conceptual organization of indices in the T-score framework}
\end{table}

\section{Dynamic Modeling of the T-score}\label{sec:dynamic}
The T-score defined above is a single index at the end of the game.
In this study, we extend it to a stochastic process that evolves over time, called the “T-process.”
That is, we regard performance as fluctuating over the course of a game due to both deterministic and random components.

The deterministic component is specified by the function representing the contributions of each STATS,
\[
F_\a : \mathbb{R}^d \to \mathbb{R}.
\]
The random component is modeled as a continuous-time random fluctuation.

\subsection{On the “Standardization” of Cumulative STATS}\label{sec:standarize}
To define the T-process, we would like to specify a certain form of “standardized STATS,”
not only to avoid having to consider differences in units across STATS,
but also to make them consistently comparable throughout the time evolution.

Below, let the ($i$-th) cumulative STATS from the start of the game ($t=0$) to the end of the game ($t=1$) at time $t\in [0,1]$ be denoted by
\[
S_i=(S_i(t))_{t\in [0,1]}.
\]
We then introduce a scaling rule that accounts for time evolution.

\begin{defn}\label{def:standarize}
Let $\wt{S}_i(t)$ denote the ($i$-th) cumulative STATS at time $t\in [0,1]$, and set
\[
m_i:=\E[\wt{S}_i(1)],\quad v_i^2:=\Var(\wt{S}_i(1)).
\]
Define
\begin{align*}
S_i(t):= \frac{\wt{S}_i(t) - m_i t}{v_i}.
\end{align*}
We call this the \textbf{diffusion standardization} of the STATS process $S_i=(S_i(t))_{t\in [0,1]}$. In particular, $S_i(0) =0$, and
\begin{align}
S_i(1) =\frac{\wt{S}_i(1) - m_i}{v_i} \label{standard}
\end{align}
corresponds to the usual normalization of the final STATS in the game.
\end{defn}

\begin{rem}
The “standardization” used in Section \ref{sec:model} concerned post-game STATS, and hence coincides with the standardization in the sense of \eqref{standard}.
\end{rem}

We now explain the motivation for the term “diffusion standardization.”
This standardization differs from the usual “standardization” that normalizes the variance to $1$ at each time:
\[
\frac{\wt{S}_i(t)-\E[\wt{S}_i(t)]}{\sqrt{\Var(\wt{S}_i(t))} } = \frac{\wt{S}_i(t)-m_i t}{v_i\sqrt{t}},
\]
and it is intentional that the denominator in diffusion standardization does not include $\sqrt{t}$.

If $\wt{S}_i$ is a Poisson-type process with intensity $\lambda_i$\footnote{For many STATS in team sports, such an assumption may be reasonable.}, then the diffusion-standardized process $\{S_i(t)\}_{t\in[0,1]}$ has a variance structure proportional to time, namely $\Var(S_i(t)) \propto t$.
Since this matches the variance structure of Brownian motion, $S_i(t)$ can be approximated by a diffusion limit under high-frequency regimes ($\lambda_i\to\infty$).
Indeed, if $\wt{S}_i$ is a compound Poisson process, one can prove that its diffusion standardization $S_i$ converges weakly to a standard Brownian motion on a function space under a high-frequency limit (see the next example \ref{ex:CP}).

Later, we will exploit this fact to perform dynamic modeling of the T-score, which enables prediction of future T-scores and,
as a byproduct, prediction of the win probability.

\begin{ex}[A probabilistic model for cumulative STATS processes]\label{ex:CP}
In many team sports, the cumulative STATS process $\wt{S}_i(t)$ can be regarded as an integer-valued, nondecreasing, right-continuous process.

For example, consider cumulative points in basketball. The number of made shots is a point process $N=(N_t)_{t\in [0,1]}$.
Let $X_k$ denote the points scored on the $k$-th scoring event. Then $\{X_k\}_{k=1,2,\dots}$ is an IID sequence taking values $1$ (free throw), $2$ (regular field goal), or $3$ (three-point shot).
The cumulative score up to time $t$ is then given by
\[
\wt{S}(t) = \sum_{k=1}^{N_t} X_{k}.
\]
Assume that $N$ and $X_{k}$ are independent and that $N$ is a Poisson process with intensity $\lambda$.
Then $S$ is a compound Poisson process.
Let $\E[X_1]=m$ and $\Var(X_1)=\s^2$, and apply diffusion standardization to $\wt{S}$. We obtain
\[
S^\lambda(t) := \frac{\wt{S}(t)-\E[\wt{S}(t)] }{\sqrt{\Var(\wt{S}(1))}} =  \frac{\wt{S}(t)- \la mt }{\sqrt{\la(\s^2+m^2)}}.
\]
Now consider the high-intensity limit ($\la \to \infty$)\footnote{This is an approximation in which the number of successful scoring events is regarded as “large.”}.
Then the following holds for the $h$-time increment of $S^\lambda$: for any $t, h>0$, by the (functional) central limit theorem,
\begin{align*}
S^\la(t+h)& - S^\la(t) \\
&= \frac{\wt{S}(t +h) -\la m (t+h)}{\sqrt{\la (\s^2+m^2)}} -  \frac{\wt{S}(t) -\la m t}{\sqrt{\la (\s^2+m^2)}} \\
&= \frac{\wt{S}(t +h) - \wt{S}(t) - \la m h}{\sqrt{\la (\s^2+m^2) h}}\cdot \sqrt{h} \\
&\toD N(0,h),\quad \la \to \infty.
\end{align*}
Thus, the $h$-increment of the diffusion-standardized process $S^\la$ converges in distribution to a normal distribution with mean $0$ and variance $h$,
which indicates that $S^\la$ converges weakly to a standard Brownian motion $W=(W_t)_{t\in [0,1]}$\footnote{More rigorously, one needs to establish tightness of $S^\la$ in the space $D([0,1])$, which we omit here.} (see, for example, \cite{b99}).
More precisely, for any $t\in [0,1]$,
\[
S^\la \toD W\quad \mbox{in $D([0,1])$},\quad \la\to \infty,
\]
where $D([0,1])$ denotes the space of functions that are right-continuous with left limits.
\end{ex}

In this way, “diffusion standardization” can be regarded as a natural scaling motivated by diffusion limits,
which explains why we do not scale by the instantaneous standard deviation.

In the next subsection, we use this fact to model the flow of a game dynamically.

\subsection{Dynamic Modeling of the T-score: T-process}

\begin{defn}
Fix constants $\a_0 \in \mathbb{R}$ and $\sigma>0$.
Let $a(t)$ and $b(t)$ denote the points scored and conceded at time $t\in[0,1]$, respectively.
Define the T-score value at time $t$ by
\[
T(t) := T(a(t),b(t)).
\]
The stochastic process $T=(T(t))_{t\in[0,1]}$ is called the T-process.
\end{defn}

\begin{as}
The T-process $T=(T(t))_{t\in[0,1]}$ satisfies the following:
for the function $F_\a$ in Assumption \ref{as:team},
\begin{align}
T(t) = \a_0 + F_\a\big(S_1(t),\dots,S_d(t)\big) + \sigma \e(t),\quad t\in[0,1],
\end{align}
where $\e = (\e(t))_{t\in [0,1]}$ is a standard Brownian motion.
\end{as}

\begin{rem}
In particular, before the game starts ($t=0$), we have $T(0)=\a_0$, and after the game ends ($t=1$) the model coincides with \eqref{T^k},
so the formulation is consistent.
\end{rem}

The T-process is a dynamic model that simultaneously describes structural changes in performance driven by STATS
and random fluctuations inherent in game development.

\subsection{A Modification of the T-process: modified T-process}

The T-process describes the temporal evolution of a single team’s performance,
but in actual games, evaluation is always determined in a relative relationship with the opponent.
That is, even for the same performance, its contribution to winning can differ depending on the opponent’s fundamental strength.

Let $\a_0$ denote our team’s TFS and $\beta_0$ denote the opponent’s TFS.
It is natural to represent the initial advantage based on their difference, using the same T-score function as in the definition of TFS, as
\[
T(\a_0,\beta_0).
\]
This constant represents the relative superiority at the start of the game in the sense of the given $T$-score function.
Therefore, when we interpret the flow of a game that incorporates the opponent through changes in the T-score,
it is natural to replace the initial value $\a_0$ by $T(\a_0,\beta_0)$.

Accordingly, we propose the following modified T-process.

\begin{defn}
Let $\a_0$ be our team’s TFS and $\beta_0$ the opponent’s TFS. For $t\in[0,1]$, define
\begin{equation}
{}_mT(t) = T(\a_0,\beta_0) + F_\a\big(S_1(t),\dots,S_d(t)\big) +
\sigma \e(t).
\end{equation}
The stochastic process ${}_mT=( {}_mT(t))_{t\in[0,1]}$ is called the \textbf{modified T-process}.
The value $ {}_mT(t)$ is called the \textbf{modified T-score} at time $t$.
\end{defn}

Since the ${}_mT$-score provides a real-time (time-$t$-wise) quantification of performance, its increase or decrease is regarded as representing the team’s flow.
When the ${}_mT$-score falls below the draw benchmark value $c$ (here $c=1$), it is closer to “losing,”
and when it exceeds $c$, it suggests moving closer to “winning.”
Moreover, as discussed later, by predicting the future ${}_mT$-score based on information up to time $t$ as a stochastic process,
we can also compute the “win probability” at time $t$.

\subsection{Future T-scores: predicted T-process}

We can use the ${}_mT$-process to grasp the flow of a game and plan strategies.
For this purpose, we consider as important indicators the prediction of the ${}_mT$-score at a future time $u\,(>t)$ evaluated at time $t$,
and the win probability at time $t$.

\begin{defn}
Let $a:=a(t)$ and $b:=b(t)$ denote the points scored and conceded at time $t\in[0,1]$.
Define the filtration (history) $\mathbb{F}=(\mathcal{F}_t)_{t\in[0,1]}$ by
\[
\mathcal{F}_t:=\sigma\bigl(S(s),a(s),b(s): s\le t\bigr).
\]
At a given time $t$, define the predicted T-score at a future time $u\in (t,1]$ by
\begin{align}
{}_pT(u|t) := T_\ast(t,a,b)+\l\{{}_mT(u)-{}_mT(t)\r\},
\end{align}
where
\begin{align}
T_\ast(t,a,b):=(1-t)T(\a_0,\b_0)+t\,T(a,b) \label{T-ast}.
\end{align}
The process ${}_pT(\cdot\,|t) =({}_pT(u|t))_{u\in[t,1]}$ is called the \textbf{predicted T-process} at time $t$.
The value ${}_pT(u|t)$ for $u>t$ is called the \textbf{predicted T-score} at time $u$ (as evaluated at time $t$).
\end{defn}

\begin{rem}\label{rem:T_*}
The definition of $T_\ast(t,a,b)$ in \eqref{T-ast} allocates the weights in time between the realized score $T(a,b)$ at time $t$
and the initial score $T(\a_0,\b_0)$.
This is an ad hoc assumption intended to emphasize the initial score when information (STATS) is scarce in the early stage,
and to emphasize the superiority implied by the realized score after sufficient information has accumulated in the later stage.
It also serves as a device to stabilize the ${}_pT$-process against accidental fluctuations early in the game.
Thus, the ${}_pT$-score should not be interpreted as a prediction in a strict sense,
but rather as an index that incorporates something like the “pre-game reputation,” and is closer to our intuitive sense of the game.
\end{rem}

\begin{rem}
We add a remark on the structure of the predicted T-process.
First, $T_\ast(t,a,b)$ is constructed solely from information observed up to time $t$, and hence is $\mathcal{F}_t$-measurable.
Therefore, the current state can be interpreted as being fully summarized by $T_\ast(t,a,b)$.
On the other hand, future uncertainty is contained only in ${}_mT(u)-{}_mT(t)$.
That is, the decomposition
\[
{}_pT(u|t)=
\underbrace{T_\ast(t,a,b)}_{\text{current benchmark}}+
\underbrace{{}_mT(u)-{}_mT(t)}_{\text{future fluctuation}}
\]
provides a structure that clearly separates “current information” from “future randomness.”
For example, if ${}_mT$ is an independent-increment process, then the increment ${}_mT(u)-{}_mT(t)$ is independent of $\mathcal{F}_t$,
and the future T-score can be described via the distribution of the increment term.

This assumption is not artificial.
Indeed, as seen in Example \ref{ex:CP}, cumulative STATS can be approximated by Brownian motion after diffusion standardization.
Moreover, if we choose $F_\a$ as a linear model,
\[
F_\a(s_1,\dots,s_d)=\sum_{i=1}^d \a_i s_i,
\]
then ${}_mT$ can be approximated as a Brownian-motion-type independent-increment process.

Thus, this formulation is not a purely probabilistic assumption,
but is adopted as a practical model that arises naturally under high-frequency approximations.
\end{rem}

\section{Computation of Real-Time Win Probability}

By the definition of the T-score function, at the terminal time $u=1$, the event that the T-score exceeds the draw benchmark value $c$ is equivalent to winning the game.
Therefore, the win probability at time $t$ is given by a conditional probability, under the information $\F_t$ available up to time $t$, that the predicted T-score at the future time $u=1$ exceeds $c$.
\begin{defn}
We define the (conditional) Probability of Win at time $t$ by
\begin{align}
PW_t&:= \P\l({}_pT(1|t) > c  \,\big|\,\F_t\r).    \label{P-formula}
\end{align}
\end{defn}

\begin{rem}
Note that the quantity $PW_t$ above is slightly different from the actual win probability.
Indeed, the predicted T-process ${}_pT(u|t)$ is constructed based on an ad hoc quantity such as $T_*$,
and does not provide a strict prediction of the future T-score.
Hence, the above is a “definition,” not a “theorem.”
The quantity $PW_t$ should be understood as what we approximately regard as the “win probability” (see also Remark \ref{rem:T_*}).
\end{rem}

By computing this probability sequentially at each time $t$ as STATS accumulate, we can visualize in real time how the win probability changes with the flow of the game.
That is, $PW_t$ is not merely a post hoc evaluation, but a “dynamic evaluation index” that is continuously updated as the game progresses.
It plays a role analogous to the “AI evaluation value” displayed in shogi broadcasts.
Not only coaches but also spectators without specialized tactical knowledge can intuitively grasp which side is currently advantaged,
and how much each play shifts the win probability.

More importantly, $PW_t$ is defined based on a probabilistically coherent model.
Therefore, it is not a heuristic metric, but a theoretically grounded real-time estimate of win probability rooted in the probabilistic structure of cumulative STATS.

In particular, under certain conditions, we can obtain an explicit expression for $PW_t$, as stated in the following theorem.

\begin{thm}
Let $a(t)$ and $b(t)$ denote the points scored and conceded at time $t\in[0,1]$, respectively. Assume the following:
\bi
\item[(a)] The diffusion-standardized cumulative STATS process
\[
(S_1(t),\dots,S_d(t))_{t\in[0,1]}
\]
follows a $d$-dimensional standard Brownian motion.
\item[(b)] The function $F_\a$ is linear: $F_{\a}(s_1,\dots,s_d)= \a_1 s_1 + \dots \a_d s_d$, that is, the T-process is given by
\[
T(t) = \a _0 + \sum_{i=1}^d \a_i S_i(t) + \s \e(t),\quad t\in [0,1]. 
\]
Here, $\e=(\e(t))_{t\ge 0}$ is a standard Brownian motion and is independent of each $S_j$.
\ei
Then the win probability is expressed as
\begin{align}
PW_t= 1 - \Phi\left( \frac{c - T_*(t,a(t),b(t))}{\sqrt{(1-t)(\tau^2+\sigma^2)}} \right)  \label{P-formula2}
\end{align}
where $c$ is the draw benchmark value, $\tau^2 = \sum_{i=1}^d \a_i^2$, and $\Phi$ denotes the cumulative distribution function of the standard normal distribution.
\end{thm}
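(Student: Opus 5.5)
The plan is to reduce $PW_t$ to a Gaussian tail probability by computing the conditional law of ${}_pT(1|t)$ given $\F_t$.

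First I would write out the increment of the modified T-process between $t$ and $1$. Under (b), the constant $T(\a_0,\b_0)$ cancels, so
\[
{}_mT(1)-{}_mT(t)=\sum_{i=1}^d \a_i\bigl(S_i(1)-S_i(t)\bigr)+\s\bigl(\e(1)-\e(t)\bigr).
\]
Hence
\[
{}_pT(1|t)=T_\ast(t,a(t),b(t))+Z_t,
\]
where $Z_t$ denotes the increment above.

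The next step is to show that $Z_t$ is independent of $\F_t$ and to identify its law. By (a), $(S_1,\dots,S_d)$ is a $d$-dimensional standard Brownian motion. Its increments on $[t,1]$ are therefore independent $N(0,1-t)$ variables, independent of $\sigma(S(s):s\le t)$. By (b), $\e$ is independent of the $S_j$, and $\e(1)-\e(t)\sim N(0,1-t)$. So $Z_t$ is a linear combination of independent centered Gaussians, with
\[
Z_t\sim N\bigl(0,(1-t)(\tau^2+\s^2)\bigr),\qquad \tau^2=\sum_i\a_i^2.
\]

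The one genuinely delicate point is that $\F_t$ also contains the score paths $a(s),b(s)$, $s\le t$. Independence of $Z_t$ from $\F_t$ therefore requires that these score histories carry no information about the future increments of $S$ and $\e$. I would handle this by making explicit the implicit modelling convention: $(S,\e)$ has independent increments with respect to $\mathbb F$, i.e.\ $(S,\e)$ is an $\mathbb F$-Brownian motion. This is natural since $a,b$ are themselves among the cumulative STATS (or are determined through the T-process by the past of $S$ and $\e$). If $a(t),b(t)$ are taken to be functions of $T(s)$, $s\le t$, then $\F_t\subset\sigma(S(s),\e(s):s\le t)$ and the independent-increment property of Brownian motion gives the claim directly. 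I expect this identification of the filtration to be the main obstacle; the rest is routine.

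Finally, $T_\ast(t,a(t),b(t))$ is $\F_t$-measurable, as noted in the remark following the definition of the predicted T-process. So, by the standard freezing lemma for conditional expectations, for $t<1$
\[
PW_t=\P\bigl(Z_t>c-x\bigr)\big|_{x=T_\ast(t,a(t),b(t))}.
\]
Standardizing $Z_t$ by $\sqrt{(1-t)(\tau^2+\s^2)}$ yields $1-\Phi(\cdot)$, which is \eqref{P-formula2}. For $t=1$ the formula is read as a limit, or excluded.
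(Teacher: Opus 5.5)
Your proposal is correct and takes the paper's route: you write ${}_pT(1|t)$ as the $\F_t$-measurable quantity $T_\ast(t,a(t),b(t))$ plus a centered Gaussian increment with variance $(1-t)(\tau^2+\s^2)$, and then read off the Gaussian tail. Your version is tighter than the paper's in two places: you keep the term $\s(\e(1)-\e(t))$, which the paper's first display drops even though its variance includes $\s^2$, and you state explicitly that the increments of $(S,\e)$ must be independent of $\F_t$, an assumption the paper uses without comment when it passes to the conditional law given $\F_t$.
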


\begin{proof}
By Assumption (b), note that
\[
{}_mT(u)-{}_mT(t) = \sum_{i=1}^d \a_i\l(S_i(u)-S_i(t)\r).
\]
By Assumption (a),
\[
S_i(u)-S_i(t)\sim N(0,u-t), \qquad i=1,\dots,d,
\]
and the increments $S_i(u)-S_i(t)$ and $S_j(u)-S_j(t)$ are independent when $i\ne j$.
Therefore,
\[
{}_mT(u)-{}_mT(t) \sim N\l(0,(u-t)(\tau^2+\s^2)\r),
\]
where $\tau^2:=\sum_{i=1}^d \a_i^2$.
Hence,
\[
{}_pT(1| t)\mid \mathcal F_t \sim
N\l( T_\ast(t,a(t),b(t)), (1-t)(\tau^2+\sigma^2) \r),
\]
and thus
\begin{align*}
PW_t&=\P\l({}_pT(1 |t)>c\mid \F_t\r) \\
&=1-\Phi \l(\frac{c-T_\ast(t,a(t),b(t))}{\sqrt{(1-t)(\tau^2+\sigma^2)}} \r).
\end{align*}
This proves the claim.
\end{proof}

\begin{rem}
From the win probability formula \eqref{P-formula2}, we have
\begin{align}
\frac{\partial PW_t}{\partial S_i} =\phi\l(\frac{c - T_*(t,a(t),b(t))}{\sqrt{(1-t)(\tau^2+\sigma^2)}}\r) \frac{\a_i}{\sqrt{(1-t)(\tau^2+\sigma^2)}}. \label{d-PW}
\end{align}
Here, $\phi(x) = \Phi'(x)$ is the standard normal density.
Therefore, $\a_i > 0$ not only contributes to increasing the ${}_mT$-process, but also leads to an increase in $PW_t$.
\end{rem}

\section{Evaluation of Players’ Latent Ability: STATS X}

In this section, we formalize players’ latent STATS “X,” which do not appear in conventional STATS, as a transformed quantity of a player activity index called the “X-index.”
Using this, we propose a new overall player evaluation index, called the Player Total Score (PTS).

First, we define the following X-index as a player “activity index.”

\begin{defn}\label{def:X}
Given $\Delta = 1/R$ $(R\in\mathbb{N})$, set $t_r = r\Delta$ $(r=0,1,\dots,R)$.
Write the ${}_mT$-score at time $t_r$ in Game $k$ as ${}_mT^k(t_r)$, and define
\[
\mathcal{R}_{\delta}^{k}:=\left\{r=0,1,\dots,R-1 \;\middle|\;\frac{\D^{(r)}{}_mT^k}{\Delta}>\delta\right\},
\]
where $\D^{(r)}{}_mT^k:={}_mT^k(t_{r+1})-{}_mT^k(t_r)$.
Then the following subinterval is called the \textbf{$\delta$-Interval on Fire}:
\[
IoF_{\delta}^k := \bigcup_{r\in\mathcal{R}_{\delta}^{k}} [t_r,t_{r+1}) \subset [0,1]. 
\]
Let $I^{k,j}\subset[0,1]$ denote the time interval during which player $j$ was on the court in Game $k$.
Define
\[
X_{\delta}^{k,j}:=\int_{I^{k,j}} \mathbf{1}_{IoF_{\delta}^k}(t)\,dt,
\]
and call it the \textbf{X-index} of player $j$ in Game $k$.
\end{defn}

This X-index indicates the fraction of time during which the player was on the court when the game was in a “good flow,” and
\[
0 \le X^{k,j}_\delta \le |IoF_\delta^k| \le 1
\]
holds.
The quantity $X^{k,j}_\delta$ measures how much of the game the player participated in while being in an “On Fire” state.
During $IoF_\delta^k$, it is reasonable to regard all players on the court as synchronizing and producing good performance.
Therefore, even if no event is directly counted as STATS during that interval,
the idea is to view it as a period in which everyone was implicitly playing well, and to award uniform points to the players who were on the court in that interval.

\begin{defn}\label{def:pts}
\bi
\item[(i)] For a given function $h:[0,1]\to \mathbb{R}$, the transformed value of the X-index $X^{k,j}_\d$,
namely $h(X^{k,j}_\d)$, is called \textbf{STATS X} of player $j$ in Game $k$ (based on $h$).

\item[(ii)] The overall contribution aggregated over $n$ games is defined by
\begin{align}
PTS^j := \frac{1}{n} \sum_{k=1}^n {[PCS^{k,j} + h(X^{k,j})] },
\end{align}
and is called the \textbf{Player Total Score (PTS)}.
\ei
\end{defn}

\begin{rem}
The function $h$ determines the extent to which the evaluation of “good flow” is incorporated, and can be chosen freely depending on the team’s evaluation criteria.
From a theoretical viewpoint, it is desirable that $h$ satisfies the following properties.
\bi
\item[(1)] \textbf{Normalization condition}:
\[
h(0)=0.
\]
No additional evaluation is assigned to a player who is not involved in IoF at all.

\item[(2)] \textbf{Monotonicity}:
\[
x_1 < x_2 \ \Rightarrow\ h(x_1) \le h(x_2).
\]
This guarantees that longer involvement in good flow leads to higher evaluation.

\item[(3)] \textbf{Boundedness}:
There exists a constant $C>0$ such that
\[
|h(x)| \le C \quad (x \in [0,1]).
\]
This keeps $h(X^{k,j}_\delta)$ on a scale comparable to the existing $PSS^{k,j}$ and $PCS^{k,j}$,
and ensures stability of the overall evaluation.

\item[(4)] \textbf{Continuity}:
The function $h$ is continuous on $[0,1]$.
This prevents the evaluation value from jumping discontinuously in response to small changes in IoF.

\item[(5)] \textbf{Convexity or concavity (strategy-dependent)}:
If $h$ is chosen to be convex, the design emphasizes long involvement in IoF,
whereas if $h$ is concave, the design relatively rewards players who are involved in IoF even for a short time.
\ei
\end{rem}

\begin{rem}\label{rem:h-select}
A linear choice,
\[
h(x)=\kappa x \quad (\kappa>0,\; x\in[0,1]),
\]
is the simplest yet powerful option that adds flow contribution with a constant weight.
For selecting the constant $\kappa$, one natural option is to set
\[
\kappa=\delta
\]
based on the threshold $\delta$ used in defining the X-index.
Indeed, by the definition of $\delta$-IoF, if $r\in\mathcal{R}_\delta^k$, then
\[
\D^{(r)}{}_mT^k>\delta \Delta
\]
holds.
If a set $A\subset IoF_\delta^k$ is a finite union of intervals of the form $[t_r,t_{r+1})$, then
\[
\sum_{r:\,[t_r,t_{r+1})\subset A}\D^{(r)}{}_mT^k> \delta |A|.
\]
In particular, letting $A = I^{k,j}\cap IoF_\delta^k$, we obtain
\[
\sum_{r:\,[t_r,t_{r+1})\subset I^{k,j}\cap IoF_\delta^k}
\D^{(r)}{}_mT^k>\delta X^{k,j}_\delta.
\]
Thus, $\delta X^{k,j}_\delta$ can be interpreted as a quantity giving a lower bound for the increment of the ${}_mT$-process over the time intervals
during which the player is involved in IoF.

In the basketball data analysis presented later, we adopt this linear form.
\end{rem}

\section{Real Data Analysis: Basketball}\label{sec:data}
The data used in this study consist of the 2022--2023 season of the Japanese professional basketball B1 League (from the official B.League website \cite{BL}),
as well as the play-by-play data and box-score data for each team in the 2022--2023 season provided by Data Stadium Inc.

In what follows, we analyze the regular season ($n=60$ games) using the following eight representative STATS.

\begin{table}[htbp]
\centering
\begin{tabular}{ll}
\toprule
$S_1^k$ : Points (PTs)            & $S_5^k$ : Offence Rebound (OR) \\
$S_2^k$ : Field Goal Made (FGM)   & $S_6^k$ : Assists (AS)         \\
$S_3^k$ : Three Points FGM (3FGM) & $S_7^k$ : Turn Over (TO)       \\
$S_4^k$ : Defence Rebound (DR)    & $S_8^k$ : Foul Drawn (FD)       \\
\bottomrule
\end{tabular}
\caption{Definition of STATS $\{S_i^k\}_{i=1,\dots,8}$ in game $k$}
\label{tab:stats-definition}
\end{table}

For modeling the $T$-process, we adopt the following specifications:
\bi
\item As the T-score function, we use \eqref{T-ratio}:
\[
T(a,b) = \left(2-\frac{b}{a}\right)\I_{\{a\ge b\}} + \frac{a}{b}\I_{\{b>a\}}.
\]

\item To model the T-score, we employ a linear model for $F_\a:\R^8\to \R$ with $\ve_k\sim (0,\s^2)\ \mbox{(IID)}$:
\[
T^k = \a_0 + \a_1 S_1^k + \dots + \a_8 S_8^k + \ve_k.
\]
Accordingly, using the player-level STATS $(S^{k,j})$, we have
\[
T^k = \a_0 + \sum_{j=1}^J \l\{ \a_1 S_1^{k,j} + \dots + \a_8 S_8^{k,j}\r\} + \ve_k.
\]

\item Let $\a_{0:8}:=(\a_0,\a_1,\dots, \a_8)$ be unknown parameters. We estimate them by the least squares method:
\begin{align}
\wh{\a}_{0:8} = \arg\min_{\a_{0:8} \in \Pi} \sum_{k=1}^n \l|Y^k - \l\{\a_0 + \a_1 S_1^k + \dots + \a_8 S_8^k\r\} \r|^2,  \label{lse}
\end{align}
where $\Pi$ is a sufficiently large bounded closed subset of $\R^9$.

\item The function defining STATS X is given by
\[
h(x) = \d x,\quad  \d>0,
\]
and the value of $\d$ is determined for each game (as described later).
\ei 

The analysis proceeds as follows:
\begin{enumerate}
\item Using the STATS from all regular-season games in 2022--2023 ($n=60$), we compare the estimated values of $TFS=\a_0$ across teams and examine whether they are consistent with the interpretation of “team fundamental strength.” (\ref{sec:TFS-hikaku})

\item For the top-ranked team in winning percentage, “Chiba Jets,” we estimate the parameters $\a_{0:8}$ and discuss the team’s characteristics based on the estimated coefficients. (\ref{sec:estimate})

\item Focusing on several games of the “Chiba Jets,” we plot the ${}_mT$-process and the corresponding win probability $PW_t$ derived from the estimated model, and discuss their relationship. (\ref{sec:strategy})

\item For the above games, we compute player evaluations both without STATS X (PCS) and with STATS X (PTS), and compare them with existing player evaluation metrics such as ETF and PER. (\ref{sec:X-hikaku})
\end{enumerate}

\subsection{Comparison of Team Fundamental Strength Using TFS}\label{sec:TFS-hikaku}

By definition, TFS ($\a_0$) is the parameter that determines the initial level of the T-process at the start of the game,
and represents a “prior fundamental advantage” that exists independently of the dynamic component (the $F_\a$ term).

From the table of TFS values (\ref{tab:TFS}) and the scatter plot (\ref{fig:TFS}),
the correlation coefficient between winning percentage and $\a_0$ is $\rho=0.989$, which is nearly $1$.
In particular, for the top teams such as
Chiba ($\a_0=1.141$, winning percentage $0.883$),
Ryukyu ($\a_0=1.088$, winning percentage $0.800$),
and Nagoya ($\a_0=1.086$, winning percentage $0.716$),
we observe clearly that $\a_0>1$.
On the other hand, for lower-ranked teams such as
Osaka ($\a_0=0.977$),
Kyoto ($\a_0=0.952$),
and Niigata ($\a_0=0.875$),
we have $\a_0<1$.
These findings suggest that $\a_0$ is not merely a fitting coefficient,
but rather quantifies the relative fundamental strength within the league,
supporting the interpretation that $\a_0$ represents team fundamental strength.

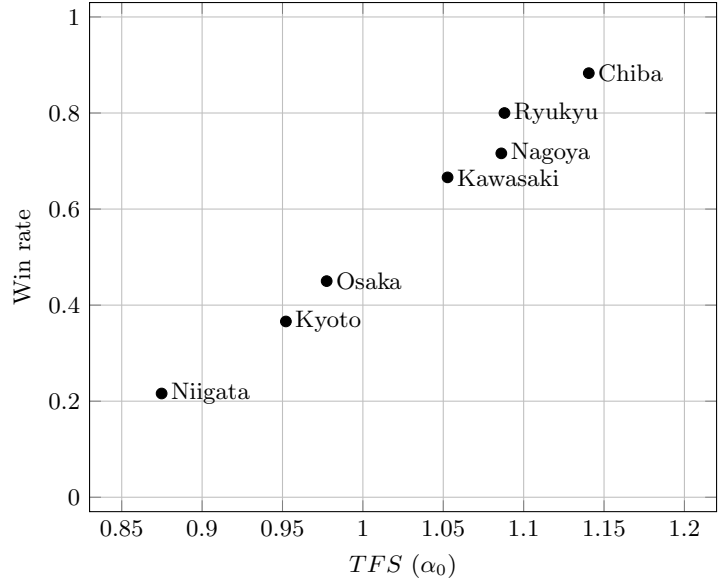
\begin{figure}[htbp]
\centering
\begin{tikzpicture}
\begin{axis}[
    width=0.55\linewidth,
    height=0.45\linewidth,
    scale only axis,
    clip=false,
    enlarge x limits={abs=0.02},
    enlarge y limits={abs=0.03},
    font=\small,
    xlabel={$TFS$ ($\alpha_0$)},
    ylabel={Win rate},
    xmin=0.85, xmax=1.2,
    ymin=0, ymax=1,
    grid=both,
]
\addplot[only marks, mark=*, mark size=2pt] coordinates {
 (1.140517,0.883)
 (1.088059,0.800)
 (1.086108,0.716)
 (1.05268,0.666)
 (0.977479,0.450)
 (0.952111,0.366)
 (0.874814,0.216)
};

\node at (axis cs:1.140517,0.883) [anchor=west] {Chiba};
\node at (axis cs:1.088059,0.800) [anchor=west] {Ryukyu};
\node at (axis cs:1.086108,0.716) [anchor=west] {Nagoya};
\node at (axis cs:1.05268,0.666) [anchor=west] {Kawasaki};
\node at (axis cs:0.977479,0.450) [anchor=west] {Osaka};
\node at (axis cs:0.952111,0.366) [anchor=west] {Kyoto};
\node at (axis cs:0.874814,0.216) [anchor=west] {Niigata};

\end{axis}
\end{tikzpicture}
\caption{Scatter plot of $TFS$ and win rate ($\rho=0.989$)}
\label{fig:TFS}
\end{figure}

\begin{table}[htbp]
\centering
\small
\setlength{\tabcolsep}{4pt}
\renewcommand{\arraystretch}{1.05}
\begin{tabular}{lccc}
\toprule
Team & TFS$(\alpha_0)$ & W--L & Win rate\\
\midrule
Chiba   & 1.140517 & 53--7  & 0.883\\
Ryukyu  & 1.088059 & 48--12 & 0.800\\
Nagoya  & 1.086108 & 43--17 & 0.716\\
Kawasaki& 1.05268  & 40--20 & 0.666\\
Osaka   & 0.977479 & 27--33 & 0.450\\
Kyoto   & 0.952111 & 22--38 & 0.366\\
Niigata & 0.874814 & 13--47 & 0.216\\
\bottomrule
\end{tabular}
\caption{Comparison of TFS $(\alpha_0)$ and win rate}
\label{tab:TFS}
\end{table}

When $\alpha_0$ is larger than the opponent's, the modified T-score ${}_mT(0)$ starts above $1$ by construction.
This represents a pre-game advantage against the opponent, and thus $PW_t$ starts above $0.5$ at game start.
For example, for the top-win-rate team Chiba Jets, Table \ref{tab:mT} shows that ${}_mT(0)>1$ against every opponent, implying $PW_t>0.5$ before the game begins.

\begin{table}[htbp]
\centering
\begin{tabular}{lccc}
\toprule
Opponent & $\beta_0$ (opponent TFS) & $\alpha_0$ (Chiba TFS) & ${}_mT(0)$ \\
\hline\hline
Ryukyu  & 1.088059 & 1.140517 & 1.045995 \\
Nagoya & 1.086108 & 1.140517 & 1.047706 \\
Kawasaki  & 1.052680 & 1.140517 & 1.077015 \\
Osaka  & 0.977479 & 1.140517 & 1.142951 \\
Kyoto  & 0.952111 & 1.140517 & 1.165194 \\
Niigata  & 0.874814 & 1.140517 & 1.232967 \\
\bottomrule
\end{tabular}
\caption{Modified T-score ${}_mT(0)$ at game start relative to Chiba Jets ($\alpha_0=1.140517$)}\label{tab:mT}
\end{table}

\subsection{Estimation of $\alpha_{0:8}$ for Chiba Jets.}\label{sec:estimate}

In this subsection, we focus on Chiba Jets, the team with the highest win rate, and discuss parameter estimation and the influence of each STATS.
Table \ref{tab:chiba} reports the estimation results.

\begin{table}[htbp]
\centering

\begin{tabular}{lcccc}
\toprule
             & LSE & Std. Error & $t$-value & $p$-value \\ 
\hline\hline
TFS ($\alpha_0$)  & 1.140157 & 0.013296   & 85.752  & $<$ 2e-16  \\
PTs ($\alpha_1$)      & 0.062776 & 0.020391   & 3.079   & 0.003175  \\
FGM  ($\alpha_2$)          & -0.019203 & 0.017117  & -1.122  & 0.266533   \\
3FGM ($\alpha_3$)          & 0.009615 & 0.016535   & 0.582   & 0.563142   \\
OR  ($\alpha_4$)        & -0.006753 & 0.014714  & -0.459  & 0.647960   \\
DR  ($\alpha_5$)         & 0.056412 & 0.013568   & 4.158   & 0.000107  \\
AS  ($\alpha_6$)        & 0.006372 & 0.015492   & 0.411   & 0.682348   \\
TO  ($\alpha_7$)         & -0.010293 & 0.013796  & -0.746  & 0.458627   \\
FD ($\alpha_8$)          & -0.013582 & 0.014983  & -0.906  & 0.368430   \\
\bottomrule
\end{tabular}
\caption{Estimation results from 60 Chiba Jets games} \label{tab:chiba}
\end{table}

Note that the sign and statistical significance of each $\alpha_i\ (i=0,1,\dots,8)$ indicate the direction and strength of the effect of each STATS on the ${}_mT$-process.

\paragraph{TFS ($\alpha_0$)}
It is strongly significant at the 5\% level. The interpretation is as discussed in the previous subsection.

\paragraph{PTs ($\alpha_1$)}
It is significant at the 5\% level. Since $\alpha_1 > 0$, an increase in points increases $mT$.
It is important that this remains significant even after controlling for other STATS.

\paragraph{DR ($\alpha_5$)} 
It is highly significant. Since $\alpha_5 > 0$, defensive rebounds are a major factor increasing the ${}_mT$-score.
Moreover, $|\alpha_5|\approx |\alpha_1|$ suggests that the contribution of defensive rebounds is comparable in importance to scoring.

\paragraph{Non-significant variables}
FGM, 3FGM, OR, AS, TO, and FD are not statistically significant with
\[
p > 0.05.
\]
\begin{itemize}
\item If $\mathrm{Cov}(PTs, FGM) > 0$ is large, multicollinearity may render the individual coefficient of FGM insignificant.

\item The estimate for OR is $\widehat{\alpha}_4 = -0.006753$, negative but not significant.
Offensive rebounds often increase when shooting percentage is low, so confounding such as $\mathrm{Cov}(OR, FGM) < 0$ may occur.

\item TO has $\widehat{\alpha}_7 < 0$, which is directionally consistent with domain knowledge, but not significant.
\end{itemize}

Overall, the approximation
\begin{align*}
T &= \alpha_0 + 0.0628\,PTs + 0.0564\,DR\\ 
&\quad +(\text{others are not statistically distinguishable from 0})
\end{align*}
appears reasonable.

Indeed, applying stepwise variable selection by AIC using the R function {\tt step} yields a similar conclusion.

Thus, the primary drivers of the Chiba Jets' ${}_mT$-process appear to be {\bf points (PTs) and defensive rebounds (DR)}.

In summary, Chiba Jets achieve a strong $mT$-process through:
\begin{enumerate}
\item extremely high baseline strength (large $\alpha_0$),
\item scoring ability,
\item defensive rebounds.
\end{enumerate}
In particular, the finding $|\alpha_5| \approx |\alpha_1|$ quantitatively suggests that defensive control, not only offense, supports their top win rate.

\begin{rem}
In Table \ref{tab:chiba}, the estimated OR coefficient is negative but not significant ($p=0.648$), so under this dataset and linear model, the independent contribution of OR is not distinguishable from 0.
Moreover, regression coefficients represent conditional effects given other STATS. Since OR can increase with missed shots, including FGM and PTs in a multivariate regression may cause residual OR to proxy ``low offensive efficiency", resulting in a negative sign.
Hence, the sign interpretation of OR should be treated cautiously, and discussions of key drivers should be based on significant PTs and DR.
\end{rem}

\begin{rem}[Negative coefficients for seemingly positive STATS and index design]
As noted above, estimated coefficients for STATS that are usually interpreted as positive contributions can become negative.
This occurs because regression coefficients are conditional effects with other covariates held fixed, and correlations/multicollinearity among covariates can cause signs to differ from marginal effects.

Therefore, directly reflecting coefficient signs into a player evaluation index (PCS) may yield deductions inconsistent with domain knowledge.
Possible design improvements include:
\begin{itemize}
\item[(1)] Orthogonalization:
Regress the STATS on other related indices and use only the residual component for PCS, removing confounding effects.

\item[(2)] Sign-constrained design:
Impose nonnegativity constraints on contributions to PCS, or disallow negative contributions, preserving monotonicity based on domain knowledge.

\item[(3)] Rate/efficiency transformations:
Normalize count-type STATS by opportunities or attempts to separate pace and exogenous components, reducing sign reversals.
\end{itemize}
We do not pursue these in this paper, but they constitute important open problems.
\end{rem}

\subsection{Strategic implications from the path relation of ${}_mT$-process and $PW_t$}\label{sec:strategy}

In this section, we focus on the following two games of the Chiba Jets in the 2022--2023 season, and examine the relationship between the time paths of the ${}_mT$-process and the win probability $PW_t$, discussing strategic implications for game management as well as their connection to player evaluation.

The following two games are from the 2022--2023 playoffs. After estimating the model based on the 60 regular-season games, we apply it to predict and analyze these playoff games.

\begin{enumerate}
\item vs. Ryukyu Golden Kings (game on 2023/5/28, Chiba Jets lost 73--88): Figure \ref{fig:ryukyu}.
\item vs. Alvark Tokyo (game on 2023/4/30, Chiba Jets won 94--66): Figure \ref{fig:tokyo}.
\end{enumerate}

\subparagraph{Basic structure}

$PW_t$ is monotonically increasing in $T_*:=T_*(t,a(t),b(t))$, and
\begin{align}
0< \frac{\partial PW_t}{\partial T_*} =\frac{\phi\!\left(\frac{c-T_*}{s(t)}\right)}{s(t)} \le \frac{1}{\sqrt{2\pi}}\frac{1}{s(t)} \label{dP/dT}
\end{align}
where $s(t):=\sqrt{(1-t)(\tau^2+\sigma^2)}$. Thus,
\begin{itemize}
\item if ${}_mT$ increases then $PW_t$ increases,
\item since $s(t)$ becomes smaller in the late game, the same change in ${}_mT$ produces a larger change in $PW_t$,
\end{itemize}
which is the key structural feature.
Also, IoF represents ``time intervals where ${}_mT$ rises significantly", and strategically corresponds to ``intervals where the team has control".

\subparagraph{Path interpretation for the Ryukyu game (loss)}

In the losing game, we observe:
\begin{itemize}
\item IoF intervals are intermittent and short,
\item there exist sharp decline intervals after rises in ${}_mT$,
\item in the late game, ${}_mT$ trends downward.
\end{itemize}
Since $s(t)\downarrow 0$ near the end, the magnitude of
\[
\Delta PW_t = \frac{\phi(\cdot)}{s(t)}\Delta T_*
\]
is amplified, so a small late-game drop in ${}_mT$ can cause a steep fall in $PW_t$.

Strategically, the loss can be interpreted as due to:
\begin{itemize}
\item failure to sustain IoF in the late game,
\item inability to suppress ``reversal" intervals where ${}_mT'<0$.
\end{itemize}

\subparagraph{Path interpretation for the Tokyo game (win)}

In the winning game, we observe:
\begin{itemize}
\item sustained IoF intervals from early stages,
\item ${}_mT$ evolves in an approximately monotone increasing manner,
\item ${}_mT$ remains at a high level near the end.
\end{itemize}
In this case, there are long intervals with $T_* \gg c$, leading to
\[
PW_t \approx 1
\]
early. Hence, pushing ${}_mT$ above the baseline $c$ early contributes to stabilizing win probability.

\subparagraph{Connection to player evaluation}

The X-index measures how much a player is involved in IoF.
In a winning game with long sustained IoF, players with large X-index contribute to forming the rising intervals of ${}_mT$.
In a losing game, IoF is short and intermittent, and $\sum_j X_{k,j}^\delta$ tends to be smaller.
Thus,
\begin{itemize}
\item PCS measures the magnitude of statistical outcomes,
\item X-index measures contribution to sustaining control,
\end{itemize}
and these are consistent with the time structure of the ${}_mT$-process.

\subparagraph{Strategic implications}
The above suggests that the key to maximizing win probability is:
\begin{enumerate}
\item how to create IoF intervals early,
\item how to suppress late-game intervals with ${}_mT'<0$,
\item how to optimize lineups featuring players who stay involved in IoF for long durations.
\end{enumerate}
That is, ``winning can be viewed as optimizing the duration of IoF".

\begin{figure}[htbp]
\centering
\pgfplotsset{compat=1.17}
\caption{Trajectories of ${}_mT$-process and $PW_t$ (vs. Ryukyu Golden Kings)}
\label{fig:ryukyu}
\begin{tikzpicture}
\def\FigW{0.65\linewidth}
\def\FigH{0.35\linewidth}

\begin{axis}[
    name=mtaxis,
    width=\FigW, height=\FigH,
    xlabel={\small Time (min)},
    ylabel={\small ${}_mT$-process},
    ylabel style={yshift=-4pt},
    xmin=0, xmax=40,
    ymin=0.85, ymax=1.1,
    grid=major,
    axis y line*=left,
    axis x line*=bottom,
    legend style={at={(0.02,0.98)},anchor=north west},
    scale only axis
]
\addplot[name path=mt, mark=none, color=blue, thick] coordinates {
    (0, 1.04599493) (1, 1.042735064) (2, 1.046206941) (3, 1.057716457)
    (4, 1.038833616) (5, 1.051746679) (6, 1.044101716) (7, 1.029130846)
    (8, 1.006815518) (9, 0.999170555) (10, 1.008649453) (11, 1.009527176)
    (12, 1.009446035) (13, 0.999042193) (14, 1.013872984) (15, 0.986520925)
    (16, 0.976202813) (17, 0.96611695) (18, 0.968537301) (19, 0.946221974)
    (20, 0.936136111) (21, 0.913820783) (22, 0.918884767) (23, 0.955949238)
    (24, 0.945104037) (25, 0.919276056) (26, 0.92042807) (27, 0.948317646)
    (28, 0.929434806) (29, 0.931855157) (30, 0.928214825) (31, 0.905899498)
    (32, 0.920609252) (33, 0.90976405) (34, 0.899403896) (35, 0.920978727)
    (36, 0.895463161) (37, 0.888577536) (38, 0.90328729) (39, 0.884404449)
    (40, 0.888756701)
};
\addlegendentry{${}_mT$-process}

\addplot[name path=base, draw=none] coordinates {(0,0.85) (40,0.85)};

\addplot[draw=none, fill=purple!20] fill between[of=mt and base, soft clip={domain=13:14}];
\addplot[draw=none, fill=purple!20] fill between[of=mt and base, soft clip={domain=22:23}];
\addplot[draw=none, fill=purple!20] fill between[of=mt and base, soft clip={domain=26:27}];
\addplot[draw=none, fill=purple!20] fill between[of=mt and base, soft clip={domain=34:35}];

\addplot[purple, line width=4pt] coordinates {(13,0.85) (14,0.85)};
\addplot[purple, line width=4pt] coordinates {(22,0.85) (23,0.85)};
\addplot[purple, line width=4pt] coordinates {(26,0.85) (27,0.85)};
\addplot[purple, line width=4pt] coordinates {(34,0.85) (35,0.85)};

\end{axis}

\begin{axis}[
    width=\FigW, height=\FigH,
    xmin=0, xmax=1, ymin=0, ymax=1,
    at={(mtaxis.south east)}, anchor=south east,
    axis lines=none, ticks=none, clip=false,
    legend style={
        at={(0.98,0.98)}, anchor=south east,
        draw=black, fill=white, fill opacity=0.9
    }
]
\addlegendimage{
    legend image code/.code={
        \draw[fill=purple!20, draw=none] (0cm,-0.1cm) rectangle (0.4cm,0.2cm);
        \draw[purple, line width=2pt] (0cm,-0.1cm) -- (0.4cm,-0.1cm);
    }
}
\addlegendentry{IoF}
\end{axis}

\begin{axis}[
    width=\FigW, height=\FigH,
    xmin=0, xmax=40,
    axis y line*=right,
    axis x line=none,
    ylabel={\small $PW_t$},
    ymin=0, ymax=1,
    at={(mtaxis.south west)}, anchor=south west,
    yticklabel pos=right,
    ylabel style={at={(axis description cs:1.13,0.45)},anchor=west},
    legend style={at={(0.98,0.98)},anchor=north east},
    scale only axis
]
\addplot[mark=none, color=red, dashed, thick] coordinates {
    (0, 0.633112548) (1, 0.634513252) (2, 0.636464454) (3, 0.640721073)
    (4, 0.637927239) (5, 0.644038482) (6, 0.643049384) (7, 0.636988626)
    (8, 0.623798097) (9, 0.617088784) (10, 0.622861811) (11, 0.622610706)
    (12, 0.621567636) (13, 0.608966307) (14, 0.625037339) (15, 0.587687877)
    (16, 0.568508225) (17, 0.546758097) (18, 0.544217083) (19, 0.494311613)
    (20, 0.462780527) (21, 0.400898633) (22, 0.396035174) (23, 0.473857897)
    (24, 0.432507978) (25, 0.344229582) (26, 0.328067189) (27, 0.397971451)
    (28, 0.315408944) (29, 0.302127462) (30, 0.265602906) (31, 0.16464631)
    (32, 0.184579141) (33, 0.120273742) (34, 0.066702474) (35, 0.092440732)
    (36, 0.018199993) (37, 0.003575635) (38, 0.00152704) (39, 9.08712E-08)
    (40, 0)
};
\addlegendentry{$PW_t$}
\end{axis}
\end{tikzpicture}
\end{figure}

\begin{figure}[htbp]
\centering
\pgfplotsset{compat=1.17}
\caption{Transition of the ${}_mT$-process and $PW_t$ (vs Alvark Tokyo)}
\label{fig:tokyo}

\begin{tikzpicture}
\def\FigW{0.65\linewidth}
\def\FigH{0.35\linewidth}

\begin{axis}[
    name=mtaxis,
    width=\FigW, height=\FigH,
    xlabel={\small Time (min)},
    ylabel={\small ${}_mT$-process},
    ylabel style={yshift=-4pt},
    xmin=0, xmax=40,
    ymin=1.05, ymax=1.16,
    grid=major,
    axis y line*=left,
    axis x line*=bottom,
    legend style={at={(0.02,0.98)},anchor=north west},
    scale only axis
]
\addplot[name path=mt, mark=none, color=green!60!black, thick] coordinates {
    (0, 1.059604635) (1, 1.051959672) (2, 1.074707065) (3, 1.081253085)
    (4, 1.095962839) (5, 1.07387976) (6, 1.079736516) (7, 1.08332943)
    (8, 1.09708032) (9, 1.125981057) (10, 1.11737723) (11, 1.107017076)
    (12, 1.081421344) (13, 1.09257808) (14, 1.127322689) (15, 1.118718861)
    (16, 1.144666492) (17, 1.151257567) (18, 1.132374727) (19, 1.122014573)
    (20, 1.135765463) (21, 1.116882622) (22, 1.118636297) (23, 1.100465188)
    (24, 1.111929649) (25, 1.114525574) (26, 1.138755003) (27, 1.126718689)
    (28, 1.104403361) (29, 1.097157715) (30, 1.10839294) (31, 1.10546434)
    (32, 1.092178239) (33, 1.077668301) (34, 1.088867079) (35, 1.103576833)
    (36, 1.115571396) (37, 1.100600526) (38, 1.091996699) (39, 1.095390087)
    (40, 1.076507246)
};
\addlegendentry{${}_mT$-process}

\addplot[name path=base, draw=none] coordinates {(0,1.05) (40,1.05)};

\addplot[draw=none, fill=purple!20] fill between[of=mt and base, soft clip={domain=8:9}];
\addplot[draw=none, fill=purple!20] fill between[of=mt and base, soft clip={domain=13:14}];
\addplot[draw=none, fill=purple!20] fill between[of=mt and base, soft clip={domain=15:16}];
\addplot[draw=none, fill=purple!20] fill between[of=mt and base, soft clip={domain=25:26}];

\addplot[purple, line width=4pt] coordinates {(8,1.05) (9,1.05)};
\addplot[purple, line width=4pt] coordinates {(13,1.05) (14,1.05)};
\addplot[purple, line width=4pt] coordinates {(15,1.05) (16,1.05)};
\addplot[purple, line width=4pt] coordinates {(25,1.05) (26,1.05)};

\end{axis}

\begin{axis}[
    width=\FigW, height=\FigH,
    xmin=0, xmax=1, ymin=0, ymax=1,
    at={(mtaxis.south east)}, anchor=south east,
    axis lines=none, ticks=none, clip=false,
    legend style={
        at={(0.98,0.98)}, anchor=south east,
        draw=black, fill=white, fill opacity=0.9
    }
]
\addlegendimage{
    legend image code/.code={
        \draw[fill=purple!20, draw=none] (0cm,-0.1cm) rectangle (0.4cm,0.2cm);
        \draw[purple, line width=2pt] (0cm,-0.1cm) -- (0.4cm,-0.1cm);
    }
}
\addlegendentry{IoF}
\end{axis}

\begin{axis}[
    width=\FigW, height=\FigH,
    xmin=0, xmax=40,
    axis y line*=right,
    axis x line=none,
    ylabel={\small $PW_t$},
    ymin=0, ymax=1,
    at={(mtaxis.south west)}, anchor=south west,
    yticklabel pos=right,
    ylabel style={at={(axis description cs:1.13, 0.45)},anchor=west},
    legend style={at={(0.98,0.98)},anchor=north east},
    scale only axis
]
\addplot[mark=none, color=red, dashed, thick] coordinates {
    (0, 0.6703012) (1, 0.6703012) (2, 0.715821421) (3, 0.79967093)
    (4, 0.789850987) (5, 0.711617975) (6, 0.6703012) (7, 0.6703012)
    (8, 0.830021513) (9, 0.909100774) (10, 0.931036583) (11, 0.95933093)
    (12, 0.972067979) (13, 0.990153758) (14, 0.996539223) (15, 0.998883117)
    (16, 0.999760679) (17, 0.999962985) (18, 0.99997657) (19, 0.99999521)
    (20, 0.99999938) (21, 0.999999356) (22, 0.999999463) (23, 0.999999974)
    (24, 0.999999998) (25, 1) (26, 1) (27, 1) (28, 1) (29, 1)
    (30, 1) (31, 1) (32, 1) (33, 1) (34, 1) (35, 1)
    (36, 1) (37, 1) (38, 1) (39, 1) (40, 1)
};
\addlegendentry{$PW_t$}
\end{axis}

\end{tikzpicture}
\end{figure}
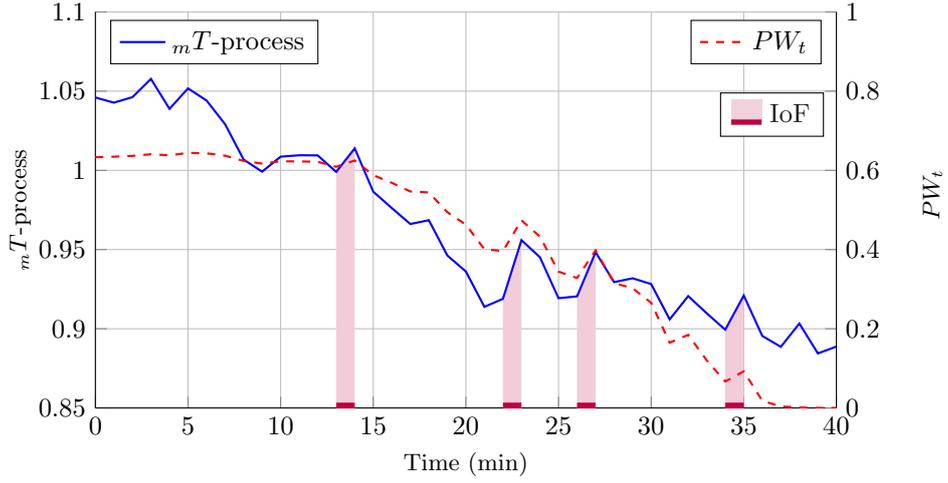

\subsection{Player evaluation with the introduction of STATS X}\label{sec:X-hikaku}

Next, we examine the player evaluations in these two games.
Tables \ref{tab:pcs_statsx1} and \ref{tab:pcs_statsx2} report PCS and STATS X for each game.

From these tables, we observe that players with high PCS do not necessarily have high STATS X, and conversely, players with mid-level PCS may exhibit relatively large STATS X. This discrepancy arises from the essential difference in the quantities measured. PCS reallocates the baseline strength $\alpha_0$ according to relative in-game performance and evaluates the overall statistical output of a game, whereas the $X$-index measures playing time during $\mathrm{IoF}_\delta$, that is, periods in which the ${}_mT$-process increases significantly.

First, in the game vs. Ryukyu (\ref{tab:pcs_statsx1}), PCS ranks Togashi (0.221), Lowe (0.204), Smith (0.134), and Mooney (0.127) in descending order, with the top group clearly separated. In contrast, STATS X is largest for Edwards (0.059), followed by Lowe (0.044), Mooney (0.044), and Hara (0.042). Notably, although Edwards has a mid-level PCS of 0.081, his $X$ is the largest, indicating strong involvement in upward momentum phases. In other words, while his aggregate statistical output is not outstanding, he contributes substantially during periods when the team’s flow improves.

Next, in the game vs. Alvark Tokyo (\ref{tab:pcs_statsx2}), Smith has the highest PCS (0.255) and also the highest STATS X (0.076). In this game, statistical output and momentum contribution coincide, concentrating both “performance-driven” and “flow-driven” effects in the same player. Meanwhile, Nishimura has PCS 0.087 and $X=0.070$, showing relatively strong involvement in IoF intervals. Thus, the alignment between aggregate performance and flow contribution varies across games, indicating structural differences depending on game context.

PTS is defined on a per-game basis as
\[
PTS_j = PCS_j + h(X_j), \quad h(x)=\delta x.
\]
Here, $h(X_j)$ provides the lower bound $\delta X_j$ for the increment of the ${}_mT$-process during IoF intervals. Hence, PTS integrates relative performance and momentum contribution into a dynamic evaluation measure.

Indeed, from \ref{tab:pts_recalc}, in the game vs. Ryukyu, the highest PTS is Togashi (0.221), followed by Lowe (0.204), Smith (0.134), and Mooney (0.127). In this game, the $X$-index adjustment does not substantially alter the ranking, and the structure remains performance-driven. In contrast, in the game vs. Alvark Tokyo, Smith remains the highest at 0.257, confirming that aggregate performance and flow contribution operate in the same direction.

Furthermore, Table \ref{tab:pts-per} reveals clear differences from conventional metrics. Under EFF, Togashi ranks first, while Lowe ranks fifth. Under PER, Smith ranks first and Lowe fourth, so Lowe is not necessarily top-ranked by existing indices. This reflects the fact that additive metrics heavily dependent on total scoring may undervalue certain contributions. However, Lowe’s STATS X is 0.044, a relatively high value, indicating substantial involvement in IoF intervals. Consequently, he ranks second in PCS and maintains second in PTS, suggesting that he combines both statistical output and flow contribution.

Similarly, Edwards is positioned mid-tier in conventional EFF and PER, yet has the highest STATS X (0.059). In this particular game, however, $\delta$ is relatively small, so PTS does not dramatically elevate his ranking. Thus, even when flow contribution exists, its impact on overall evaluation depends on the chosen weight $\delta$.

These findings make it evident that
\[
\text{total statistical output} \neq \text{dynamic contribution to win probability}.
\]
EFF and PER are static metrics that evaluate total statistics or efficiency without accounting for time structure or situational dependence. In contrast, PTS is constructed consistently with the temporal structure of the ${}_mT$-process and $PW_t$, incorporating the timing of contributions as an evaluation axis.

Of course, being on the court during IoF does not necessarily imply that a specific player alone generated the momentum; another player may have driven the increase. However, when aggregated over many games across a season, such coincidences are unlikely to persist systematically, and STATS X can meaningfully capture dynamic contributions to win probability.

These results have important implications for salary assessment and contract evaluation of professional players. Compensation is often determined by visible metrics such as scoring totals or PER. However, as shown here, contributions to momentum formation may not be sufficiently reflected in such static indicators.

That said, the choice of $h(x)=\d x$ remains open to discussion. The selection of $\d$ is somewhat arbitrary (in this study, based on the fourth-largest rate of change), and different choices may overemphasize dynamic contributions. This aspect inevitably reflects team preferences. Nevertheless, introducing dynamic indicators provides the potential to quantify player value that has been previously underestimated, thereby offering a foundation for more sophisticated player evaluation, contract strategy, and roster construction.

\begin{table}[htbp]
\centering
\begin{tabular}{lcc}
\toprule
  Player          & PCS & STATS X  \\ 
\hline\hline
Yuki Togashi  & 0.209108   & 0.044   \\
Asato Ogawa    & 0.036235   & 0.017    \\
Vic Law        & 0.194218  & 0.044    \\
Fumio Nishimura      & 0.04238   & 0.015     \\
Takuma Sato    & 0.085812  & 0.000     \\
\textbf{Gavin Edwards}    & 0.085812   & \textbf{0.059}    \\
{\bf Shuta Hara}    & 0.03758   & {\bf 0.042}      \\
\textbf{John Mooney}       & 0.126809  & \textbf{0.044}     \\
Christopher Smith       & 0.132884  & 0.030    \\
\bottomrule
\end{tabular}
\caption{Player evaluation values and STATS X (vs Ryukyu Golden Kings), $\delta = 0.0148$.}
\label{tab:pcs_statsx1}
\end{table}

\begin{table}[htbp]
\centering
\begin{tabular}{lcc}
\toprule
  Player          & PCS & STATS X  \\ 
\hline\hline
Yuki Togashi  & 0.144944   & 0.027   \\
\textbf{Asato Ogawa}    & 0.121435   & \textbf{0.050}    \\
Vic Law        & 0.176123  & 0.047    \\
Katsumi Takahashi     & 0.049197  & 0.000    \\
\textbf{Fumio Nishimura}      & 0.087941   & \textbf{0.070}     \\
Takuma Sato    & 0.073266  & 0.052     \\
Gavin Edwards    & 0.063762   & 0.047    \\
Gaku Arao & 0.041005   & 0.000      \\
Shuta Hara    & 0.13093   & 0.045      \\
John Mooney       & 0.134514  & 0.050     \\
Christopher Smith       & 0.235613  & 0.076    \\
Jaba Yoneyama & 0.051284  & 0.021    \\
\bottomrule
\end{tabular}
\caption{Player evaluation values and STATS X (vs Alvark Tokyo), $\delta = 0.0242$.}
\label{tab:pcs_statsx2}
\end{table}

\begin{table}[htbp]
\centering
\begin{tabular}{lcc}
\toprule
  Player          & PTS$_1$ & PTS$_2$  \\ 
\hline\hline
Yuki Togashi  & 0.139536 & 0.121908     \\
Asato Ogawa    & 0.112269 & 0.144683      \\
Vic Law        & 0.139535 & 0.142261     \\
Fumio Nishimura      & 0.109871 & 0.165034        \\
Takuma Sato    & 0.095047 & 0.146867     \\
Gavin Edwards    & 0.154366 & 0.141532      \\
Shuta Hara    & 0.137340 & 0.140079         \\
John Mooney       & -0.014636 & 0.145410      \\
Christopher Smith       & 0.124705 & 0.171094     \\
\bottomrule
\end{tabular}
\caption{PTS$_1$: vs Ryukyu, PTS$_2$: vs Alvark Tokyo}
\label{tab:pts_recalc}
\end{table}

\begin{table}[htbp]
\setlength{\tabcolsep}{4pt}
\centering
{\small 
\begin{tabular}{lll|ll}
\toprule
Rank & PCS & PTS & EFF & PER \\ 
\hline\hline
1 & Togashi & Edwards & Togashi & Smith \\
2 & {\bf Law} & Togashi & Smith & Togashi \\
3 & Smith & {\bf Law} & Mooney & Mooney \\
4 & Mooney & Hara & Edwards & {\bf Law} \\
5 & Edwards & Smith & {\bf Law} & Edwards \\
\bottomrule
\end{tabular}
}
\caption{Comparison of (proposed) PCS and PTS with (conventional) EFF and PER (vs Ryukyu)}
\label{tab:pts-per}

\end{table}

\section{Concluding Remarks}
\subsection{Directions for Further Development}

In this paper, we have provided a descriptive analysis based on the path relationship between the ${}_mT$-process and $PW_t$. From a theoretical viewpoint, however, further extensions are possible.

For example, using an idea similar to IoF, one may formalize “intervention times” as stopping times. Let
\[
\tau^-_\delta := \inf\l\{ t\in[0,1]\,\Big|\, \frac{\D {}_mT^k(t)}{\D}\le -\delta \r\}
\]
denote the first time at which the decreasing rate of ${}_mT$ falls below a threshold. This provides a mathematical description of the moment when the flow begins to reverse. Similarly, one may define
\[
\sigma_\varepsilon := \inf\l\{ t\in[0,1]\,\Big|\,PW_t \le \varepsilon \r\},
\]
the first time at which the win probability drops below a prescribed level. Introducing such stopping times opens the possibility of formulating timeouts or player substitutions as control problems for stochastic processes.

Let the total length of IoF be
\[
L_\delta := \int_0^1 \mathbf{1}_{IoF_\delta}(t)\,dt.
\]
By definition, for $r\in\mathcal{R}_\delta$ we have $D^{(r)}{}_mT > \delta \Delta$, and hence
\[
\sum_{r\in\mathcal{R}_\delta} D^{(r)}{}_mT > \delta L_\delta.
\]
Since
\[
{}_mT(1)-{}_mT(0)
= \sum_{r\in\mathcal{R}_\delta} D^{(r)}{}_mT
+ \sum_{r\notin\mathcal{R}_\delta} D^{(r)}{}_mT,
\]
it follows that
\[
{}_mT(1)-{}_mT(0)
> \delta L_\delta
- \sum_{r\notin\mathcal{R}_\delta} |D^{(r)}{}_mT|.
\]
Therefore, the longer the IoF duration, the more likely ${}_mT(1)$ is to be large. Since $PW_t$ is monotone increasing in $T_*$, $L_\delta$ provides a lower-bound type indicator for the final win probability $PW_1$. Furthermore, in view of \eqref{dP/dT}, where $\partial PW_t/\partial T_*$ increases toward the end of the game, it is natural to introduce an improved $X$-index with end-game weighting. For example,
\[
X_{k,j}^{\delta,w}
:=
\int_{I_{k,j}} w(t)\,\mathbf{1}_{IoF_\delta^k}(t)\,d t,
\]
with a weight such as $w(t)=1/\sqrt{1-t}$, would emphasize IoF involvement in the closing phase.

These extensions reinterpret the ${}_mT$-process as a dynamic process and game management as a time-dependent control problem. Although we do not pursue the details here, this constitutes an important theoretical direction.

\subsection{Potential Developments through AI}

The proposed framework is grounded in a stochastic process model. However, many of its components are highly compatible with machine learning techniques, leaving substantial room for AI-based enhancement and automation.

\subparagraph{\textbf{Learning the T-process}}
In this study, the cumulative STATS process is diffusion-normalized and approximated by a $d$-dimensional Brownian motion. In practice, however, a diffusion limit approximation may not always be appropriate. Instead, the dynamics of
\[
(S_1(t),\dots,S_d(t))_{t\in[0,1]}
\]
can be directly learned using deep time-series models such as RNNs or Transformers. In such a setting, the theoretical model serves as a structural constraint, enabling a hybrid between stochastic process modeling and deep learning rather than a pure black-box approach.

\subparagraph{\textbf{Nonlinearization of the Win Probability Function}}
Currently, we assume a linear mapping
\[
F_\a(s)=\a_1 s_1+\dots+\a_d s_d.
\]
Replacing this with a neural network approximation
\[
F_\theta:\mathbb{R}^d\to\mathbb{R}
\]
allows the learning of nonlinear relationships. While the linear model guarantees tractability and interpretability, AI-based extensions substantially enhance expressive power, especially in capturing interaction effects and asymmetric contributions among STATS.

\subparagraph{\textbf{Automated Detection of Intervention Intervals}}
Previously, we defined “intervention periods” as intervals in which ${}_mT$ decreases persistently. This detection problem can be reformulated as anomaly detection or change-point detection. For example, estimating online the probability that
\[
\Delta {}_mT(t) < -\varepsilon
\]
persists for a certain duration can be connected to sequential Bayesian inference or reinforcement learning, opening the way to real-time tactical support AI.

\subparagraph{\textbf{Causal Player Evaluation}}
Although PCS is defined relative to team averages, one may further learn lineup effects within a structural causal framework. By employing deep causal inference and representation learning, it becomes possible to generate counterfactual evaluations, such as identifying which player combinations reduce performance gaps.

\paragraph{\textbf{Advantages of AI Integration}}
\begin{itemize}
\item Improved predictive accuracy via nonlinear modeling
\item Automatic feature extraction from high-dimensional STATS
\item Extension to real-time analysis
\item Scalable application to large league datasets
\end{itemize}

\paragraph{\textbf{Limitations and Challenges}}
\begin{itemize}
\item Loss of analytical tractability
\item Reduced interpretability
\item Increased dependence on data quality
\item Risk of overfitting
\end{itemize}

A key strength of this study lies in the explicit analytical expression of win probability. While AI integration enhances flexibility, it may compromise mathematical transparency. Therefore, a promising future direction is a semiparametric structure in which the theoretical model remains the core and learning-based components complement it.

\paragraph{\textbf{Future Outlook}}
The proposed framework consists of three layers:
(1) a stochastic process-based theoretical foundation,
(2) an analytical representation of win probability,
(3) a structural decomposition of player contributions.
Maintaining this three-layer architecture while integrating
\[
\text{Theory} \;\;+\;\; \text{Learning} \;\;+\;\; \text{Real-time Adaptation}
\]
may lead to tactical support AI, front-office analytics AI, and even league-wide decision-support systems.

Advancing AI integration while preserving probabilistic consistency is a meaningful research agenda not only for sports analytics but also for dynamic decision-making problems in general. We hope that this study provides an example of bridging mathematical modeling and artificial intelligence in sports data science.

\section*{Acknowledgements}
We express our deepest gratitude to Data Stadium Inc., which provided the minute-by-minute play-by-play data for each B.League team, and to the Institute of Statistical Mathematics, The Institute of Statistical Mathematics, Research Organization of Information and Systems (Statistical Thinking Institute), for their invaluable support. This study could not have been realized without these precious data infrastructures.
In addition, part of this study \cite{ys24} was presented by the second author at the 2024 Sports Data Science Competition (basketball division) organized by the Sports Statistics Section of the Japanese Statistical Society, where it received an Excellence Award. The constructive and insightful comments from the organizers, judges, participants, and others greatly contributed to the development of this research. In particular, there was substantial progress in further theoretical refinement, proposal of new indices, and interpretation of various results, which has culminated in the present paper. We record our sincere appreciation here.

\end{document}